\documentclass[conference]{IEEEtran}
\IEEEoverridecommandlockouts
\usepackage{cite}
\usepackage{amsmath,amssymb,amsfonts,amsthm}
\usepackage{algorithmic}
\usepackage{graphicx}
\usepackage{pgfplots}
\usepgfplotslibrary{groupplots,fillbetween}
\usepackage{textcomp}
\usepackage{xcolor}
\def\BibTeX{{\rm B\kern-.05em{\sc i\kern-.025em b}\kern-.08em
    T\kern-.1667em\lower.7ex\hbox{E}\kern-.125emX}}
\newtheorem{theorem}{Theorem}
\newtheorem{definition}{Definition}
\newtheorem{example}{Example}
\newtheorem{lemma}{Lemma}
\newtheorem{proposition}{Proposition}
\newtheorem{corollary}{Corollary}

\begin{document}

\title{Worst-Case Utility Privacy Mechanism via Pointwise Maximal Leakage\\
\thanks{This work was supported by Vinnova Competence centre Dig-IT lab.}
}

\author{\IEEEauthorblockN{Ci Song, Tobias J. Oechtering}
\IEEEauthorblockA{\textit{Department of Information Science and Engineering} \\
\textit{KTH Royal Institute of Technology, Sweden}}
}

\maketitle

\begin{abstract}
We propose a discrete privacy mechanism exploiting beneficial properties of the novel privacy measure Pointwise Maximal Leakage (PML). Given the utility assignment characterized by every input-output letter pair, we study the mechanism design problem that satisfies PML privacy guarantees and maximizes the worst-case utility. Unlike popular privacy measures like Differential Privacy (DP), PML allows us to set some conditional probabilities in the mechanism to be zero and thereby preventing the occurrence of some low utilities while preserving a strict PML constraint. We show that the utility-safe mechanism, with low computational complexity, is optimal for the worst-case utility problem with an additional constraint on the output support set. We finally demonstrate the effectiveness in several numerical experiments. Due to DP's inability to have zeros in the mechanism, the design of privacy mechanisms that optimize the worst-case utility is underexplored, and this work shows that PML is a privacy measure that is perfectly suited for this purpose.
\end{abstract}

\begin{IEEEkeywords}
Pointwise Maximal Leakage, privacy mechanism, worst-case utility
\end{IEEEkeywords}

\section{Introduction}
\label{sec:intro}
With the development of data-based technologies in industries, statistical disclosure control has become a critical problem. A privacy mechanism is described by a conditional probability distribution used to disclose information about the data. We study the privacy-utility trade-off in the mechanism design. Besides designs using DP \cite{PrivBook}, e.g., exponential mechanism \cite{ExpM} and geometric mechanism \cite{Geo}, recent information-theoretic mechanism design approaches include \cite{10815813, 9517826, TotalVariationDistance, 9448019}. 

For quantitative privacy risk assessment, the designed mechanism needs to provide a formal privacy guarantee. In this work, we use PML \cite{Saeidian_2023} as the privacy measure. The random variable $X$ is defined on a finite alphabet $\mathcal{X}$ with $|\mathcal{X}|=N$ and PMF $P_X(x)$, and it denotes the data to be protected. $Y$ is defined on the alphabet $\mathcal{Y}$, and denotes the output of the privacy mechanism. Without loss of generality, we assume that $P_X$ has full support. Let $\mathcal{S}_Y$ denote the support of $Y$. It has been shown that PML is operationally meaningful and robust \cite{Saeidian_2023}. The $\epsilon$-PML privacy guarantee for a mechanism is defined as follows.
\begin{definition}[$\epsilon$-PML \cite{Saeidian_2023}]
    Given prior $P_X$, the privacy mechanism $P_{Y|X}$ satisfies $\epsilon$-PML if for all $y \in \mathcal{S}_Y$, we have
    $$
        \ell_{P_{Y|X}\times P_X}(X \rightarrow y) = \log \max_{x \in \mathcal{X}} \tfrac{P_{X|Y=y}(x)}{P_X(x)} \leq \epsilon.
    $$
\end{definition}

Recent risk-management frameworks require us to minimize potential harms to people, especially when certain types of failures can cause greater harm \cite{nist_ai_rmf_2023}. Motivated by accuracy and robustness considerations \cite{eu_ai_act_2024}, the worst-case performance becomes a key design criterion. Different from DP, \footnote{DP does not allow zeros in the mechanism, because DP is designed as the worst-case ratio of conditional probabilities, and zero in the denominator makes the ratio infinite.} PML does not forbid us to set some conditional probabilities $P_{Y|X=x}(y)$ to zero, which enables us to improve the worst-case utility. A mechanism considering the worst-case utility aims to completely avoid undesirable results, providing not only a privacy guarantee but also a utility guarantee. 

Another property of PML is its prior dependence. In \cite[Theorem 4.2]{saeidian2023rethinking}, it has been shown that DP corresponds to the PML guarantee at the worst-case prior distribution that is attained at a degenerated prior where the probability of one letter is one. Thus DP is in general over-conservative, and improvement can be made if some available knowledge about the prior is exploited. In this work, we assume perfect prior knowledge. We recently studied the task of estimating the empirical PML from data \cite{Leo_csf}, which can be directly applied to this setting as well so that this problem is out of our scope. 

Considering infinite $\mathcal{Y}$, the $\epsilon$-PML Laplace mechanism is proposed in \cite{saeidian2023rethinking}. For finite $\mathcal{Y}$ and sub-convex utility functions, the $\epsilon$-PML extremal mechanism is studied in \cite{EMPML}. In this paper, we propose an $\epsilon$-PML mechanism with finite $\mathcal{Y}$ and $|\mathcal{Y}|=M$ considering arbitrary utilities. 

In the following, we will characterize the utility and define the optimization problem in Section \ref{sec:problem}. We will propose the utility-safe mechanism and the general optimal solution in Section \ref{sec:method}. We will compare $\epsilon$-PML utility-safe mechanism with $\epsilon$-PML exponential mechanism and randomized response mechanism in Section \ref{sec:experiment}. Besides, we will also compare the optimality of the utility-safe mechanism and the general optimal solution under different priors in the same section.

\section{Problem Formulation}
\label{sec:problem}
We maximize the worst-case utility while preserving a PML constraint by setting the probabilities of some input-output letter pairs in the mechanism equal to zero to prevent the occurrence of low utilities. To this end, our approach only needs to know the position of the lowest utility values for each input $x\in\mathcal{X}$. This information is preserved in the ascending order of the utilities for each input, and we use this order to represent arbitrary utilities assuming in this work that all utility values are different. Accordingly, let $u: \mathcal{X} \times \mathcal{Y} \rightarrow [M]$ denote the order information of the utility assigned to the output $y$ when the data is $x$, where $[M]$ denotes the set of integers from $1$ to $M$. This means, $u(x_i,y_j)=k$ specifies the pair $(x_i,y_j)$ that results in the $k$-th lowest utility value for input $x_i$.

For simplicity, we use the matrix notation $U = (u_{ij})_{1\leq i \leq N, 1\leq j \leq M}$ for the utility order, where $u_{ij} = u(x_i, y_j)$ and $U_i$ denotes its  $i$-th row. Similarly, we use the matrix notation $P_{Y|X} = (p_{ij})_{1\leq i \leq N, 1\leq j \leq M}$ for the privacy mechanism with $p_{ij} = P_{Y|X=x_i}(y_j)$. For the given prior $P_X$, let $\mathcal{P}_{M|N}^{(\epsilon)}$ denote the set of $\epsilon$-PML privacy mechanisms. As shown in \cite[Lemma 1]{EMPML}, $\mathcal{P}_{M|N}^{(\epsilon)}$ is a polytope in $\mathbb{R}^{N \times M}$ described by linear constraints. Different from the constraints of DP, these constraints allow some parameters to be zero, making it possible to prevent the occurrence of the lowest utilities and thereby enhance the worst-case utility of the mechanism.

For prior probability $P_X$ and utility order matrix $U$, the $\epsilon$-PML worst-case utility mechanism design is formulated as the optimization problem
\begin{gather}
\label{eq:p1}
    \max_{P_{Y|X} \in \mathcal{P}_{M|N}^{(\epsilon)}} h(P_{Y|X}),
\end{gather}
where the objective function $h(P_{Y|X}) = \min_{i,j:p_{ij}>0} u_{ij}$. Let $P_{Y|X}^\perp$ denote the mechanism where the output $Y$ is independent of the input $X$, i.e., $p_{ij}=p_j, \forall i, j$. $P_{Y|X}^\perp$ leaks no information. Thus $P_{Y|X}^\perp \in \mathcal{P}_{M|N}^{(\epsilon)}$ for any $\epsilon \geq 0$ so that problem \eqref{eq:p1} is always feasible.

Problem \eqref{eq:p1} is difficult to compute directly. However, we can utilize properties of PML to efficiently minimize PML when the worst-case utility order is larger than or equal to $h^*$, which is the following problem
\begin{align}
    \label{eq:p2}
    \min_{P_{Y|X} \in \mathcal{P}_{M|N}^{(\infty)}} &\max_{y \in \mathcal{S}_Y} \ell_{P_{Y|X}\times P_X}(X\rightarrow y)\\
    \text{s. t. } &p_{ij}=0, \text{ if }u_{ij}<h^*.\notag
\end{align}
The solution to problem \eqref{eq:p2} will be shown in the next section. 

\begin{lemma}
    \label{le:h}
    Let $h^*(\epsilon) = \max_{P_{Y|X} \in \mathcal{P}_{M|N}^{(\epsilon)}} h(P_{Y|X})$. As $\epsilon$ decreases, $h^*(\epsilon)$ is non-increasing.
    \begin{proof}
        A smaller $\epsilon$ leads to a smaller feasible set for the optimization problem, so the maximum value cannot be larger.
    \end{proof}
\end{lemma}
\noindent
With Lemma \ref{le:h} and the ability to solve problem \eqref{eq:p2}, we can run a binary search over $h^*$. Note that the optimal solution to problem \eqref{eq:p1} is usually not unique. Using problem \eqref{eq:p2} and binary search, we find the mechanism with the lowest PML in the set of optimal solutions to problem \eqref{eq:p1}.

Since $h^*$ is a positive integer less than or equal to $M$, it is possible to enumerate solutions over $h^*$. Thus, we solve problem \eqref{eq:p1} for all $\epsilon \geq 0$ once we have solved problem \eqref{eq:p2} for all $h^*$ and characterized the whole trade-off curve.

\section{Optimal Mechanism Design}
\label{sec:method}
We have discussed how to solve problem \eqref{eq:p1} and further minimize PML without affecting the worst-case utility via solving problem \eqref{eq:p2}. In this Section, we characterize the optimal mechanism of problem \eqref{eq:p2}.

\subsection{Utility-safe Mechanism}
\label{sec:method1}
As a function of conditional probabilities, PML may not be continuous at zero, because the output support $\mathcal{S}_Y$ changes and the new term might decide the worst-case PML level. Because of this, in this subsection we add the following constraint to prevent issues due to the non-continuity and discuss the general case in the following subsection. For a given $h^*$, we require that the output support
\begin{gather}
\label{con:ac}
    \mathcal{S}_Y = \mathcal{Y}^+(h^*)
\end{gather}
where $\mathcal{Y}^+(h^*)=\left \{y_j \in \mathcal{Y}: \exists i \in [N], u_{ij}\geq h^*  \right \}$.

\begin{lemma}[Decomposition Property of PML]
\label{le:twosources}
    Given $P_X$ and $P_{Y|X}$, for all $y \in \mathcal{S}_Y$, we have 
    $$
        \ell_{P_{Y|X}\times P_X}(X\rightarrow y) = \underline{\ell}(\mathcal{S}_{X|Y=y}) + \ell_{P_{Y|X}\times Q_X}(X\rightarrow y),
    $$
    with $\underline{\ell}(\mathcal{S}_{X|Y=y}) = \log \tfrac{1}{\sum_{x \in \mathcal{S}_{X|Y=y}} P_X(x)}$, the rescaled prior $Q_X(x)=\tfrac{P_X(x)}{\sum_{x' \in \mathcal{S}_{X|Y=y}} P_X(x')}$  and the alphabet of $Q_X$ is $\mathcal{S}_{X|Y=y} = \left \{x \in \mathcal{X} : P_{Y|X=x}(y)>0 \right \}$.
    \begin{proof}
    The proof is in \cite[Appendix~C]{EMPML}.
    \end{proof}
\end{lemma}
\noindent
Lemma \ref{le:twosources} shows that PML can be decomposed into two terms. As long as $P_{Y|X=x}(y)$ remains positive, a change of $P_{Y|X=x}(y)$ does not affect the worst-case utility. Only the induced support $\mathcal{S}_{X|Y=y}$ matters. In order to maximize the worst-case utility, we wish to reduce the support $\mathcal{S}_{X|Y=y}$ by reducing the letter that results in the lowest utility. This pruning should be repeatedly done as often as the privacy constraint allows us to do. To enable many pruning steps, we need to minimize $\ell_{P_{Y|X}\times Q_X}(X\rightarrow y)$ so that more privacy budget is saved for $\underline{\ell}(\mathcal{S}_{X|Y=y})$ when reducing the support. 

\begin{lemma}[{Lower bound of PML \cite[Lemma 1]{Saeidian_2023}}]
\label{le:zero}
    With $Q_X(x)$ in Lemma \ref{le:twosources}, $\ell_{P_{Y|X}\times Q_X}(X\rightarrow y) \geq 0$, where the inequality holds with equality if and only if for all $x\in \mathcal{S}_{X|Y=y}$, $ P_{Y|X=x}(y)=P_Y(y)$.
    \begin{proof}
        Here we use $Q_X$ instead of $P_X$. Both are proper prior distributions, so the proof remains the same as in \cite{Saeidian_2023}.
    \end{proof}
\end{lemma}

Let $z_i=\sum_{j=1}^M \boldsymbol{1}(p_{ij}=0)$ denote the number of zeros in $i$-th row of $P_{Y|X}$, where $\boldsymbol{1}(\cdot)$ is the indicator function. For each row $U_i$, the worst-case utility order is upper-bounded by $z_i+1$ with equality if zeros are optimally placed in $P_{Y|X}$. Next, it follows that the worst-case utility order $h(P_{Y|X}) \leq \min_{i\in[N]} z_i+1$. Therefore, our design goal is to achieve $h(P_{Y|X})=z_i+1$ for all $i\in[N]$, which will also allow us to 
meet the conditions in Lemma \ref{le:zero}. 
 
\begin{theorem}[Utility-safe mechanism]
\label{th:rp}
    If $\mathcal{S}_Y=\mathcal{Y}^+(h^*)$, then the optimal mechanism $M^*(h^*)$ of problem \eqref{eq:p2} is
    \begin{gather}
    \label{eq:rpdef}
    [M^*(h^*)]_{ij}= \left \{
    \begin{array}{ll} 
         0 & \text{if }u_{ij}<h^*, \\
         \tfrac{1}{M-h^*+1} & \text{otherwise} ,
    \end{array}
    \right.
    \end{gather}
    where $[M^*(h^*)]_{ij}$ denotes the element in the $i$-th row and $j$-th column of $M^*(h^*)$, i.e., $[M^*(h^*)]_{ij}=\mathbb{P}(Y=y_j|X=x_i)$.
    \end{theorem}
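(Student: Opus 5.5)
Since $u_{ij}$ is a permutation of $[M]$ on each row $U_i$, the constraint $p_{ij}=0$ for $u_{ij}<h^*$ forces exactly $h^*-1$ zeros in every row. The complementary set $\{j: u_{ij}\ge h^*\}$ therefore has exactly $M-h^*+1$ elements in every row. This first shows that $M^*(h^*)$ is a valid conditional distribution: each row sums to one and respects the zero pattern of problem \eqref{eq:p2}. Its output support is $\mathcal{Y}^+(h^*)$, since every $y_j\in\mathcal{Y}^+(h^*)$ has some $i$ with $u_{ij}\ge h^*$ and hence $p_{ij}>0$, while the remaining columns are identically zero. So $M^*(h^*)$ is feasible for problem \eqref{eq:p2} together with constraint \eqref{con:ac}.

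The optimality argument uses the decomposition of Lemma \ref{le:twosources}. For any feasible $P_{Y|X}$ with the zero constraint, the induced support satisfies $\mathcal{S}_{X|Y=y_j}\subseteq \mathcal{A}_j:=\{x_i: u_{ij}\ge h^*\}$. Since $\underline{\ell}$ is decreasing in the prior mass of the support, we get $\underline{\ell}(\mathcal{S}_{X|Y=y_j})\ge \underline{\ell}(\mathcal{A}_j)$. The second term is nonnegative by Lemma \ref{le:zero}. Hence, for every $y_j\in\mathcal{S}_Y=\mathcal{Y}^+(h^*)$, the PML at $y_j$ is at least $\log\frac{1}{P_X(\mathcal{A}_j)}$. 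Constraint \eqref{con:ac} guarantees that every $y_j\in\mathcal{Y}^+(h^*)$ actually belongs to the support. This is exactly where the support condition is needed: without it a mechanism could drop a column $j$ with small $P_X(\mathcal{A}_j)$. The objective of any feasible mechanism is therefore at least
$$
L^*:=\max_{y_j\in\mathcal{Y}^+(h^*)}\log\frac{1}{\sum_{x\in\mathcal{A}_j}P_X(x)}.
$$

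It remains to show that $M^*(h^*)$ attains this bound for every $y_j$ simultaneously. Its induced support is exactly $\mathcal{S}_{X|Y=y_j}=\mathcal{A}_j$, so the first term equals $\underline{\ell}(\mathcal{A}_j)$. For the second term I check the equality condition of Lemma \ref{le:zero}, taken with respect to the rescaled prior $Q_X$ on $\mathcal{A}_j$ as in Lemma \ref{le:twosources}. For every $x_i\in\mathcal{A}_j$ we have $P_{Y|X=x_i}(y_j)=\frac{1}{M-h^*+1}$, a constant. Consequently the output probability under $Q_X$ is $\sum_{x\in\mathcal{A}_j}Q_X(x)\frac{1}{M-h^*+1}=\frac{1}{M-h^*+1}$, which equals each conditional. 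Lemma \ref{le:zero} then gives a zero second term, so the PML at each $y_j$ equals its lower bound and the maximum equals $L^*$.

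The main subtlety is keeping the two distributions in Lemma \ref{le:zero} separate: the output probability there is computed under $Q_X$, not $P_X$. It is the constancy of the nonzero entries, which holds because every row has the same number $M-h^*+1$ of them, that makes equality hold. With this, $M^*(h^*)$ is optimal.
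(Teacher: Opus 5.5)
Your proof is correct and follows essentially the same route as the paper: feasibility of $M^*(h^*)$, then Lemma~\ref{le:twosources} with the maximal induced support $\{x_i : u_{ij}\ge h^*\}$ minimizing $\underline{\ell}$, and Lemma~\ref{le:zero} giving a zero second term because the nonzero entries in each column are constant. You are somewhat more explicit than the paper in three places: the row sums via the permutation structure of each $U_i$, the role of constraint \eqref{con:ac} in ensuring every $y\in\mathcal{Y}^+(h^*)$ lies in the support, and computing the output probability under $Q_X$ rather than $P_X$; your bound $L^*$ is exactly the value stated in Corollary~\ref{co:value}.
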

    \begin{proof}
        Let $\mathcal{M}$ be the set of feasible solutions to problem \eqref{eq:p2}. Since $[M^*(h^*)]_{ij}=0$ if and only if $u_{ij}<h^*$, $M^*(h^*) \in \mathcal{M}$. We only need to show that we have $\ell_{M^*(h^*)\times P_X}(X\rightarrow y) \leq \ell_{M\times P_X}(X\rightarrow y)$ for all $y \in \mathcal{Y}^+(h^*)$ and $M \in \mathcal{M}$.
        
        Since $[M^*(h^*)]_{ij}>0$ whenever $u_{ij} \geq h^*$, $\mathcal{S}_{X|Y=y}$ of $M^*(h^*)$ is never smaller than that of $M$. Therefore, with the support set that is always the superset, $M^*(h^*)$ always has the smallest $\underline{\ell}(\mathcal{S}_{X|Y=y})$ in $\mathcal{M}$. Using Lemma \ref{le:zero}, $\ell_{M^*(h^*)\times Q_X}(X\rightarrow y) =0$ for all $y \in \mathcal{Y}^+(h^*)$. Using Lemma \ref{le:twosources}, for all $y \in \mathcal{Y}^+(h^*)$,
        \begin{gather}
        \label{eq:smallest}
            \ell_{M^*(h^*)\times P_X}(X\rightarrow y) \leq \ell_{M\times P_X}(X\rightarrow y),
        \end{gather}
        so $M^*(h^*)$ is an optimal solution to problem \eqref{eq:p2}. 
    \end{proof}

\begin{corollary}
\label{co:value}
    The optimal value of $\epsilon$ achieved by Theorem \ref{th:rp} is $- \log \min_{y \in \mathcal{Y}^+(h^*)} \sum_{x \in \mathcal{X}: u(x, y) \geq h^*} P_X(x)$.
\end{corollary}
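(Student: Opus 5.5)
We compute the objective value of problem \eqref{eq:p2} at $M^*(h^*)$ directly, using the two lemmas already invoked in the proof of Theorem \ref{th:rp}. The optimal $\epsilon$ is the largest per-output leakage $\max_{y \in \mathcal{S}_Y} \ell_{M^*(h^*)\times P_X}(X\rightarrow y)$. By Theorem \ref{th:rp}, $M^*(h^*)$ is optimal for \eqref{eq:p2} under the support constraint \eqref{con:ac}, so this maximum is the optimal value.

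\textbf{Output support.} We first check that $\mathcal{S}_Y=\mathcal{Y}^+(h^*)$ for $M^*(h^*)$. Since $P_X$ has full support, $P_Y(y_j)=\sum_i P_X(x_i)[M^*(h^*)]_{ij}$ is positive exactly when some $i$ has $u_{ij}\geq h^*$, which is the definition of $\mathcal{Y}^+(h^*)$.

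\textbf{Induced input support.} For each $y\in\mathcal{Y}^+(h^*)$, the induced support is
$$\mathcal{S}_{X|Y=y}=\{x\in\mathcal{X}: u(x,y)\geq h^*\},$$
because the nonzero entries of $M^*(h^*)$ are exactly those with $u_{ij}\geq h^*$.

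\textbf{Leakage per output.} Apply Lemma \ref{le:twosources} to $\ell_{M^*(h^*)\times P_X}(X\rightarrow y)$. On $\mathcal{S}_{X|Y=y}$ every nonzero entry in column $y$ equals $\tfrac{1}{M-h^*+1}$. Hence
$$P_Y(y)=\tfrac{1}{M-h^*+1}\sum_{x\in\mathcal{S}_{X|Y=y}}P_X(x),$$
and we must check the equality condition of Lemma \ref{le:zero} with respect to the rescaled prior $Q_X$.

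\textbf{Main obstacle.} That condition, as written, asks $P_{Y|X=x}(y)=P_Y(y)$, and this fails whenever the support sum is less than one. The condition is really meant relative to $Q_X$, i.e. the output probability computed under $Q_X$, which is $\tfrac{1}{M-h^*+1}$. To avoid this ambiguity, we will bypass Lemma \ref{le:zero} and evaluate the definition of PML directly. For $x\in\mathcal{S}_{X|Y=y}$, Bayes' rule gives
$$\frac{P_{X|Y=y}(x)}{P_X(x)}=\frac{P_{Y|X=x}(y)}{P_Y(y)}=\frac{1}{\sum_{x'\in\mathcal{S}_{X|Y=y}}P_X(x')}.$$
For $x\notin\mathcal{S}_{X|Y=y}$ the ratio is $0$. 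Therefore
$$\ell_{M^*(h^*)\times P_X}(X\rightarrow y)=-\log\sum_{x: u(x,y)\geq h^*}P_X(x)=\underline{\ell}(\mathcal{S}_{X|Y=y}),$$
which is consistent with Lemma \ref{le:twosources}, since the second term in the decomposition vanishes.

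\textbf{Maximizing over outputs.} Taking the maximum over $y\in\mathcal{S}_Y=\mathcal{Y}^+(h^*)$ and using that $-\log$ is decreasing turns the maximum into the minimum inside the logarithm. This gives the claimed value $-\log\min_{y\in\mathcal{Y}^+(h^*)}\sum_{x: u(x,y)\geq h^*}P_X(x)$.
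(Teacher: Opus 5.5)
Your proof is correct. It differs from the paper's argument only in how the per-output leakage is computed.

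The paper gives no separate proof of this corollary; it is read off from the proof of Theorem \ref{th:rp}. There, Lemma \ref{le:twosources} splits $\ell_{M^*(h^*)\times P_X}(X\rightarrow y)$ into $\underline{\ell}(\mathcal{S}_{X|Y=y})$ plus a within-support term, Lemma \ref{le:zero} sets that second term to zero, and then $\mathcal{S}_{X|Y=y}=\{x: u(x,y)\geq h^*\}$ and the maximum over $y\in\mathcal{Y}^+(h^*)$ give the value.

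You instead compute the posterior-to-prior ratio directly by Bayes' rule. You find it constant and equal to $1/\sum_{x'\in\mathcal{S}_{X|Y=y}}P_X(x')$ on the support and zero off it. This is more elementary and self-contained, and it yields Corollary \ref{co:rp} as a by-product.

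Your reservation about Lemma \ref{le:zero} is well founded. Its equality condition has to be read with the output marginal induced by $Q_X$, which under $M^*(h^*)$ equals $\tfrac{1}{M-h^*+1}$. It cannot be read with $P_Y(y)=\tfrac{1}{M-h^*+1}\sum_{x\in\mathcal{S}_{X|Y=y}}P_X(x)$, which is strictly smaller whenever the support sum is below one. Bypassing the lemma avoids relying on that loosely stated condition.

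What the paper's route buys is the structural picture: leakage equals the cost of revealing the support plus the residual leakage inside it. That picture drives the pointwise comparison in the proof of Theorem \ref{th:rp} and the later discussion of when shrinking $\mathcal{S}_Y$ helps. Your route gives the value itself more transparently.

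The one thing you leave implicit is that $M^*(h^*)$ is a valid mechanism. Each row of $U$ is a permutation of $[M]$, so exactly $M-h^*+1$ entries per row satisfy $u_{ij}\geq h^*$ and each row sums to one. You inherit this from the theorem, which is fine.
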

According to Theorem \ref{th:rp} and Corollary \ref{co:value}, the proposed mechanism only requires comparing the utility order against the threshold and the summation of prior probabilities, so it has very low computational complexity.
 
\begin{corollary}
\label{co:rp}
    Using $M^*(h^*)$ enforces that  we have 
$$
\tfrac{P_X(x)}{P_X(x')} = \tfrac{P_{X|Y=y}(x)}{P_{X|Y=y}(x')}, \quad \forall x, x' \in \mathcal{S}_{X|Y=y}.
$$
\end{corollary}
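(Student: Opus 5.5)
The plan is to compute the posterior $P_{X|Y=y}$ directly under $M^*(h^*)$ via Bayes' rule, using that every nonzero entry of $M^*(h^*)$ has the same value. The key observation is that in each column $y=y_j\in\mathcal{Y}^+(h^*)$, the entries $[M^*(h^*)]_{ij}$ take only two values. They equal $0$ when $u_{ij}<h^*$ and the constant $c=\tfrac{1}{M-h^*+1}$ otherwise. Hence $\mathcal{S}_{X|Y=y}=\{x_i: u_{ij}\geq h^*\}$, which is nonempty by the definition of $\mathcal{Y}^+(h^*)$.

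First, compute $P_Y(y)=\sum_{x\in\mathcal{X}}P_X(x)P_{Y|X=x}(y)=c\sum_{x\in\mathcal{S}_{X|Y=y}}P_X(x)>0$, so that $y\in\mathcal{S}_Y$. Second, for each $x\in\mathcal{S}_{X|Y=y}$, Bayes' rule gives
\begin{gather*}
P_{X|Y=y}(x)=\tfrac{P_X(x)\,c}{c\sum_{x'\in\mathcal{S}_{X|Y=y}}P_X(x')}=Q_X(x),
\end{gather*}
where $Q_X$ is the rescaled prior of Lemma \ref{le:twosources}. The constant $c$ cancels, and this is the only point where the uniform choice of nonzero entries in \eqref{eq:rpdef} is used. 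Third, take the ratio of this expression at two points $x,x'\in\mathcal{S}_{X|Y=y}$. The common normalizer cancels as well, which yields $\tfrac{P_{X|Y=y}(x)}{P_{X|Y=y}(x')}=\tfrac{P_X(x)}{P_X(x')}$. The ratio is well defined because $P_X$ has full support, so all the probabilities involved are positive.

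The argument has no real obstacle. The only care needed is to restrict attention to $x,x'$ in the support, where the posterior is nonzero, and to $y\in\mathcal{S}_Y$. As a consistency check, the same conclusion follows from the equality case of Lemma \ref{le:zero}, as in the proof of Theorem \ref{th:rp}: there $P_{Y|X=x}(y)$ is constant on $\mathcal{S}_{X|Y=y}$, so $\ell_{M^*(h^*)\times Q_X}(X\rightarrow y)=0$, i.e.\ $P_{X|Y=y}=Q_X$ on the support.
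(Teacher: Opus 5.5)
Your proof is correct and takes the same approach as the paper. Both arguments rest on the fact that all nonzero entries of $M^*(h^*)$ in column $y$ are equal, so they cancel in the Bayes-rule ratio of posteriors. Your explicit identification $P_{X|Y=y}=Q_X$ on $\mathcal{S}_{X|Y=y}$ is just a more detailed version of the paper's one-line computation.
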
    
    \begin{proof}
   Due to \eqref{eq:rpdef}, $P_{Y|X=x}(y)$ is  the same for all $x \in \mathcal{S}_{X|Y=y}$ so that
        $
            \tfrac{P_{X|Y=y}(x)}{P_{X|Y=y}(x')} = \tfrac{P_{Y|X=x}(y) P_X(x) / P_Y(y)}{P_{Y|X=x'}(y) P_X(x') / P_Y(y)}=\tfrac{P_X(x)}{P_X(x')}.
        $
    \end{proof}
\noindent
Given the utility-safe mechanism and the output $y$, the adversary knows that some $x \in \mathcal{X}$ are impossible to be realized. This is the necessary cost for low utility prevention. Moreover, Corollary \ref{co:rp} shows that the ratio between the probabilities of possible $x$ remains the same a priori and a posteriori. Thus, the utility-safe mechanism reveals $\mathcal{S}_{X|Y=y}$ and nothing more. 
If $\mathcal{S}_Y=\mathcal{Y}^+(h^\ast)$, it achieves the minimal privacy leakage. However, if $|S_Y|$ is changed, lower leakage can be achieved in some cases, which will be shown next.  

\subsection{General Optimal Solution}
\label{sec:method2}
First, we want to show that the utility-safe mechanism is not always the optimal mechanism of problem \eqref{eq:p2}. The following example illustrates the case where a change in $\mathcal{S}_Y$ may allow a lower PML.
\begin{example}
\label{ex:im}
    Let
    \begin{gather}
        U = 
        \begin{bmatrix}
        3 & 2 & 1 \\
        1 & 3 & 2 \\
        2 & 1 & 3
    \end{bmatrix}.
    \end{gather}
    We assume that $x_1 = \arg \max_{x \in \mathcal{X}} P_X(x)$. When $P_X(x_2)+ P_X(x_3) \leq \tfrac{1}{2}$ and $\log 2 \leq \epsilon < $ $ - \log \min_{x \in \mathcal{X}} P_X(x)$, it can be shown that an optimal solution is
    \begin{gather}
        \label{eq:newsol}
        P_{Y|X} = 
        \begin{bmatrix}
        \tfrac{1-2P_X(x_3)}{2P_X(x_1)} & \tfrac{1-2P_X(x_2)}{2P_X(x_1)} & 0 \\
        0 & 1 & 0 \\
        1 & 0 & 0
        \end{bmatrix}.
    \end{gather}

    When $P_X(x_2)+ P_X(x_3) \geq \tfrac{1}{2}$ and $- \log (P_X(x_2)+ P_X(x_3)) \leq \epsilon < - \log \min_{x \in \mathcal{X}} P_X(x)$, it can be shown that an optimal mechanism is $M^*(2)$.
\end{example}
\noindent
Thus, a smaller $\mathcal{S}_Y$ may allow to achieve lower PML, especially when $\underline{\ell}(\mathcal{S}_{X|Y=y_j})$ is much higher than other $y \in \mathcal{Y} \backslash \left \{ y_j \right \}$. In Example \ref{ex:im}, $\underline{\ell}(\mathcal{S}_{X|Y=y_3})$ is larger than $\underline{\ell}(\mathcal{S}_{X|Y=y_1})$ and $\underline{\ell}(\mathcal{S}_{X|Y=y_2})$ when using $M^*(2)$. Setting $P_Y(y_3)=0$ increases $\ell_{P_{Y|X}\times Q_X}(X\rightarrow y_1)$ and $\ell_{P_{Y|X}\times Q_X}(X\rightarrow y_2)$. However, since  $\epsilon$-PML  is the worst case over $\mathcal{S}_Y$, the privacy guarantee is not decided by $y_3$ any more, so that the $\epsilon$-PML privacy guarantee improves.

Next, we briefly show that we can also solve problem \eqref{eq:p2} via linear programming solvers if the problem complexity is sufficiently small. Although problem \eqref{eq:p2} is not a linear programming problem, we can find the optimal solution by checking the feasibility of the constraints
\begin{gather}
    p_{ij}=0, \text{ if }u_{ij}<h^*\notag\\
    p_{ij} - (\sum_{k=1}^N P_X(x_k)p_{kj})\exp(\epsilon) \leq 0.
    \label{eq:p3}
\end{gather}
This checks whether we can design an $\epsilon$-PML mechanism and keep the worst-case utility no less than $h^*$ at the same time, i.e., if $\left \{ P_{Y|X} \in \mathcal{P}_{M|N}^{(\epsilon)}: p_{ij}=0, \text{ if }u_{ij}<h^* \right \}$ is non-empty for utility $h^*$ and privacy level $\epsilon$. If so, we can attempt a lower $\epsilon$. If not, we need to increase $\epsilon$. We can use the bisection method to find the optimal $\epsilon$.

Using the conclusion from Example \ref{ex:im}, the number of variables in problem \eqref{eq:p3} can be further reduced. For all $y \in \mathcal{Y}$, we enforce that $P_Y(y)=0$ if $- \log \sum_{x \in \mathcal{X}: u(x, y) \geq h^*} P_X(x)$ exceeds $\epsilon$. This is because the PML of an output $y$ is lower bounded by $- \log \sum_{x \in \mathcal{X}: u(x, y) \geq h^*} P_X(x)$, and we cannot make it lower unless $y$ is not in $\mathcal{S}_Y$ anymore. 

Problem \eqref{eq:p3} has $O(NM)$ variables and constraints, so the computational complexity grows polynomially fast and quickly becomes computationally exhaustive. In contrast, the computational complexity of the proposed utility-safe mechanism scales linearly with the predefined dimensions.

\section{Numerical Experiment}
\label{sec:experiment}
In the first subsection, we compare the worst-case performance of the proposed utility-safe mechanism against some classic DP-based mechanisms under a PML guarantee. The numerical study on the optimality of the utility-safe mechanism is done in the next subsection.

\subsection{Worst-case Utility in Counting Query}
The exponential mechanism \cite{PrivBook} and randomized response mechanism \cite{RRM} are typical discrete output privacy mechanisms, mainly designed for DP or Local Differential Privacy (LDP) \cite{LDP1} \cite{LDP}. The design of such mechanisms for PML can be achieved via the relationship between LDP and PML, which is shown in \cite[Proposition~6]{Saeidian_2023}.
\begin{proposition}[{\cite[Proposition~6]{Saeidian_2023}}]
\label{pr:ldp}
    Let $p_{\text{min}} = \min_{x \in \mathcal{X}} \allowbreak P_X(x) \in (0, 1)$. When $\epsilon \geq - \log p_{\text{min}}$, any mechanism $P_{Y|X}$ satisfies $\epsilon$-PML. When $\epsilon < - \log p_{\text{min}}$, the mechanism $P_{Y|X}$ satisfies $\epsilon$-PML if it satisfies $\overline{\epsilon}$-LDP with
    $$
        \overline{\epsilon} \leq - \log \tfrac{\exp{(-\epsilon)-p_{\text{min}}}}{1-p_{\text{min}}}.
    $$
\end{proposition}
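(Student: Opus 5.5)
The plan is to work directly from the definition of PML and reduce the claim to a comparison of posterior-to-prior ratios. Write $q_x = P_{Y|X=x}(y)$ for a fixed output $y$. Then
$$
\frac{P_{X|Y=y}(x)}{P_X(x)} = \frac{q_x}{\sum_{x'} P_X(x') q_{x'}},
$$
so $\ell_{P_{Y|X}\times P_X}(X\rightarrow y) = \log \frac{\max_x q_x}{P_Y(y)}$. The first part of the statement is immediate from this expression. Since $q_x \le 1$ and $P_Y(y) \ge P_X(x^\ast) q_{x^\ast}$ for the maximizing $x^\ast$, we get
$$
\frac{\max_x q_x}{P_Y(y)} \le \frac{q_{x^\ast}}{P_X(x^\ast) q_{x^\ast}} = \frac{1}{P_X(x^\ast)} \le \frac{1}{p_{\text{min}}}.
$$
Hence every mechanism satisfies $(-\log p_{\text{min}})$-PML, and therefore $\epsilon$-PML for every $\epsilon \ge -\log p_{\text{min}}$.

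For the second part, assume $P_{Y|X}$ satisfies $\overline{\epsilon}$-LDP, i.e., $q_x \le e^{\overline{\epsilon}} q_{x'}$ for all $x, x'$. I will upper bound $\max_x q_x / P_Y(y)$ by isolating the maximizing letter $x^\ast$:
$$
P_Y(y) = P_X(x^\ast) q_{x^\ast} + \sum_{x\neq x^\ast} P_X(x) q_x \ge P_X(x^\ast) q_{x^\ast} + \bigl(1-P_X(x^\ast)\bigr) e^{-\overline{\epsilon}} q_{x^\ast}.
$$
This gives the ratio bound $\bigl(P_X(x^\ast) + (1-P_X(x^\ast)) e^{-\overline{\epsilon}}\bigr)^{-1}$. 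The denominator is increasing in $P_X(x^\ast)$, because $1 > e^{-\overline{\epsilon}}$. So the worst case over the unknown $x^\ast$ is $P_X(x^\ast) = p_{\text{min}}$, which yields
$$
\ell \le -\log\bigl(p_{\text{min}} + (1-p_{\text{min}}) e^{-\overline{\epsilon}}\bigr).
$$

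The final step is to invert this bound. Requiring $p_{\text{min}} + (1-p_{\text{min}}) e^{-\overline{\epsilon}} \ge e^{-\epsilon}$ is equivalent to
$$
e^{-\overline{\epsilon}} \ge \frac{e^{-\epsilon}-p_{\text{min}}}{1-p_{\text{min}}},
$$
which is exactly the stated condition on $\overline{\epsilon}$. The right-hand side lies in $(0,1]$ because $\epsilon < -\log p_{\text{min}}$, so the logarithm is well defined.

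The only delicate point is the monotonicity-in-$P_X(x^\ast)$ argument, since $x^\ast$ depends on $y$. The bound must therefore be uniform, which is exactly why $p_{\text{min}}$ appears. The remaining steps are routine.
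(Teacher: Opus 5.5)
Your proof is correct and complete. The paper does not prove this proposition itself; it imports the result from \cite[Proposition~6]{Saeidian_2023}. Your argument is the natural direct derivation of that bound: rewrite PML as $\log \max_x q_x / P_Y(y)$, isolate the maximizing letter $x^\ast$, use LDP to lower-bound the remaining mass of $P_Y(y)$ by $(1-P_X(x^\ast))e^{-\overline{\epsilon}}q_{x^\ast}$, and take the worst case $P_X(x^\ast)=p_{\text{min}}$.

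Two cosmetic points. First, the claim $\tfrac{e^{-\epsilon}-p_{\text{min}}}{1-p_{\text{min}}}\le 1$ tacitly uses $\epsilon\ge 0$; this is harmless because PML is nonnegative. Second, at $\overline{\epsilon}=0$ the denominator is constant rather than strictly increasing in $P_X(x^\ast)$, which still gives the same worst case.
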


In our numerical experiment, we solve a counting query problem of $6$ data records, and let $X$ denote the number of records that satisfy the designated property. We assume that $P_X(x_i)=\tfrac{1}{7}, \forall i =1, \dots, 7$, where $x_i=i-1$, and the output alphabet is the same as the input, which is $\mathcal{Y}=\mathcal{X}=\left \{ 0, \dots, 6 \right \}$. The utility corresponds to a quadratic loss with an additional penalty of one if the disclosed value is lower than the true value, i.e.,
\begin{align}
\label{eq:uval}
U' & = (u'_{ij})_{1\leq i \leq 7, 1\leq j \leq 7}\notag\\
&=
    \begin{bmatrix}
        0 & -1 & -4 & -9 & -16 & -25 & -36\\
        -2 & 0 & -1 & -4 & -9 & -16 & -25\\
        -5 & -2 & 0 & -1 & -4 & -9 & -16\\
        -9 & -5 & -2 & 0 & -1 & -4 & -9\\
        -17 & -9 & -5 & -2 & 0 & -1 & -4\\
        -26 & -17 & -9 & -5 & -2 & 0 & -1\\
        -37 & -26 & -17 & -9 & -5 & -2 & 0
    \end{bmatrix}.
\end{align}
The corresponding order information is preserved in the utility order matrix
    \begin{gather}
    \label{eq:umtx}
        U = 
        \begin{bmatrix}
        7 & 6 & 5 & 4 & 3 & 2 & 1\\
        5 & 7 & 6 & 4 & 3 & 2 & 1\\
        3 & 5 & 7 & 6 & 4 & 2 & 1\\
        1 & 3 & 5 & 7 & 6 & 4 & 2\\
        1 & 2 & 3 & 5 & 7 & 6 & 4\\
        1 & 2 & 3 & 4 & 5 & 7 & 6\\
        1 & 2 & 3 & 4 & 5 & 6 & 7
    \end{bmatrix}.
    \end{gather}
\noindent

The common DP solution to the counting query problem is the Laplace mechanism, which does not have a lower bound on the utility. We study discrete output mechanisms, so in this case, we use the $\epsilon$-PML exponential mechanism 
$$
P^{E}_{Y|X=x_i}(y_j) \allowbreak = \tfrac{\exp{(\tfrac{u'_{ij}\overline{\epsilon}}{74})}}{\sum_j \exp{(\tfrac{u'_{ij}\overline{\epsilon}}{74})}}.
$$ 
We use the $\epsilon$-PML randomized response mechanism to generate $x'$ from $x$, and $y = \arg \max_{y \in \mathcal{Y}} u(x', y)$. This is
\begin{gather*}
     P^{R}_{Y|X=x_i}(y_j) = \left \{
    \begin{array}{ll} 
        \tfrac{e^{\overline{\epsilon}}}{6+e^{\overline{\epsilon}}} & \text{if }i=j, \\
        \tfrac{1}{6+e^{\overline{\epsilon}}} & \text{otherwise }.
    \end{array}
    \right.
\end{gather*}
Depending on the value of $\epsilon$, the utility-safe mechanism is
\begin{gather*}
    P^{S}_{Y|X=x_i}(y_j) = \left \{
    \begin{array}{ll} 
        M^*(1) & \text{if }0\leq \epsilon < \log \tfrac{7}{3}, \\
        M^*(3) & \text{if }\log \tfrac{7}{3} \leq \epsilon < \log \frac{7}{2}, \\
        M^*(5) & \text{if }\log \frac{7}{2} \leq \epsilon < \log 7, \\
        M^*(7) & \text{if }\epsilon \geq \log 7.
    \end{array}
    \right.
\end{gather*}

We increase $\epsilon$ from $0.5$ to $2$ with increments of $0.05$. For each $\epsilon$, we generate $X$ according to the prior distribution $P_X$, choose $Y$ according to the mechanism, and repeat the experiment 1000 times. Fig. \ref{fig:ube} depicts the minimum utility value. When the privacy level $\epsilon$ is very low, we are in the high privacy regime where PML also cannot place any zero in the $P_{Y|X}$ so that we have no worst-case utility improvement. However, there are sharp increases in the worst-case utility for the utility-safe mechanism at $\epsilon=0.85$, $1.30$ and $1.95$, while the minimum values for the DP-based mechanisms (exponential (E) \& randomized response (R)) do not increase at all, which is as expected. Moreover, the increase in the worst-case utility value depends on the utility matrix. For example, $x=0, y=6$ leads to a very low utility, and preventing it brings a giant leap in the worst-case utility value. 
\begin{figure}[h]
  \centering
\begin{tikzpicture}
\begin{groupplot}[
    group style={group size=1 by 1},
    width=9cm, height=6cm,
    xlabel={$\epsilon$},
    ylabel={worst-case utility value},
    ylabel style={yshift=-10pt}
]

\nextgroupplot[legend pos=south east,
legend style={yshift=25pt} 
]]
\addplot[name path=min1, red, thin, only marks, mark=+] coordinates {
    (0.5,-37) (0.55,-37) (0.6,-37) (0.65,-37) (0.7,-37) (0.75,-37) (0.8,-37) (0.85,-17) (0.9,-17) (0.95,-17) (1.0,-17) (1.05, -17) (1.1,-17) (1.15, -17) (1.2,-17) (1.25, -17) (1.3,-5) (1.35, -5) (1.4, -5) (1.45, -5) (1.5, -5) (1.55, -5) (1.6, -5) (1.65, -5) (1.7, -5) (1.75, -5) (1.8, -5) (1.85, -5) (1.9, -5) (1.95, 0) (2.0, 0)
};
\addlegendentry{S}

\addplot[name path=min2, blue, thin, only marks, mark=o] coordinates {
    (0.5,-37) (0.55,-37) (0.6,-37) (0.65,-37) (0.7,-37) (0.75,-37) (0.8,-37) (0.85,-37) (0.9,-37) (0.95,-37) (1.0,-37) (1.05, -37) (1.1,-37) (1.15, -37) (1.2,-37) (1.25, -37) (1.3,-37) (1.35, -37) (1.4, -37) (1.45, -37) (1.5, -37) (1.55, -37) (1.6, -37) (1.65, -37) (1.7, -37) (1.75, -37) (1.8, -37) (1.85, -37) (1.9, -37) (1.95, -37) (2.0, -37)
};
\addlegendentry{E\&R}

\end{groupplot}
\end{tikzpicture}
  \caption{Sample minimum of utility given $x$ when using $\epsilon$-PML utility-safe (S), exponential (E), and randomized response (R) mechanisms. Utility-safe mechanism offers higher worst-case utility when the privacy level $\epsilon$ is not too low (here $\epsilon>0.8$). Other mechanisms do not improve their worst-case utility with increasing $\epsilon$.}
  \label{fig:ube}
\end{figure}

\subsection{Optimality}
The utility-safe mechanism is optimal if $\mathcal{S}_Y$ satisfies the constraint in \eqref{con:ac}. As is shown in subsection \ref{sec:method2}, in some rare situations this prior assumption results in a suboptimal mechanism, which is further discussed in this subsection.
    \begin{gather}
    \label{eq:umtx2}
        U = 
        \begin{bmatrix}
        4 & 3 & 2 & 1\\
        1 & 4 & 3 & 2\\
        2 & 1 & 4 & 3\\
        3 & 2 & 1 & 4
    \end{bmatrix}.
    \end{gather}
\noindent
We use the utility order matrix $U$ in \eqref{eq:umtx2} so that the priors $P_X(x_i)$ are interchangeable. The prior distribution follows two patterns. The `one-low-three-high' pattern in Fig. \ref{fig:onelow} means that one prior probability is $p_{\text{min}}$ and the other three are equal. The `three-low-one-high' pattern in Fig. \ref{fig:threelow} means that three prior probabilities are $p_{\text{min}}$. We only plot the optimal $\epsilon$ using the utility-safe mechanism in Subsection \ref{sec:method1} and the general optimal solution in Subsection \ref{sec:method2} when $h^*=2$ and $h^*=3$, because the cases of $h^*=1$ and $h^*=4$ are naive. In both figures, we increase $p_{\text{min}}$ from $0.02$ to $0.2$ with increments of $0.02$. The red lines are the optimal PML given the additional constraint on $\mathcal{S}_Y$, which is calculated from Corollary \ref{co:value}, while the blue lines are optimal without the additional constraint according to linear programming solvers. The proposed utility-safe mechanism remains optimal even without the additional constraint \eqref{con:ac} if the two lines are the same. 

\begin{figure}[h]
  \centering
\begin{tikzpicture}
\begin{groupplot}[
    group style={group size=1 by 1},
    width=9cm, height=6cm,
    xticklabel style={/pgf/number format/fixed},
    xlabel={$p_{\text{min}}$},
    ylabel={minimum PML},
    ylabel style={yshift=-10pt}
]

\nextgroupplot[legend pos=south east,
legend style={yshift=13pt} 
]]
\addplot[name path=h3s, red, thin, dashed, mark=+] coordinates {
    (0.02, 1.05939158) (0.04, 1.02165125) (0.06, 0.9852836) (0.08, 0.95019228) (0.10, 0.91629073) (0.12, 0.88350091) (0.14, 0.85175221) (0.16, 0.82098055) (0.18, 0.79112759) (0.20, 0.76214005)
};
\addlegendentry{$h^*$=3-S}

\addplot[name path=h3o, blue, thin, dashed, mark=o] coordinates {
(0.02, 1.05939865) (0.04, 1.02165222) (0.06, 0.98529816) (0.08, 0.95020294) (0.10, 0.91630936) (0.12, 0.88350296) (0.14, 0.85176468) (0.16, 0.82099915) (0.18, 0.79113007) (0.20, 0.76215744)
};
\addlegendentry{$h^*$=3-O}

\addplot[name path=h2s, red, thin, mark=+] coordinates {
    (0.02, 0.39551478) (0.04, 0.38566248) (0.06, 0.37590631) (0.08, 0.36624439) (0.10, 0.35667494) (0.12, 0.34719620) (0.14, 0.33780646) (0.16, 0.32850407) (0.18, 0.31928741) (0.20, 0.31015493)
};
\addlegendentry{$h^*$=2-S}

\addplot[name path=h2o, blue, thin, mark=o] coordinates {
(0.02, 0.39552689) (0.04, 0.38566589) (0.06, 0.37591934) (0.08, 0.36624908) (0.10, 0.35669327) (0.12, 0.34721375) (0.14, 0.33781052) (0.16, 0.32852173) (0.18, 0.31929016) (0.20, 0.31017303)
};
\addlegendentry{$h^*$=2-O}

\end{groupplot}
\end{tikzpicture}
  \caption{Lowest leakage under worst-case utility $h^*$ and `one-low-three-high' prior with the utility-safe mechanism (S) and optimal mechanism (O). Utility-safe mechanism achieves optimal performance.}
  \label{fig:onelow}
\end{figure}

\begin{figure}[h]
  \centering
\begin{tikzpicture}
\begin{groupplot}[
    group style={group size=1 by 1},
    width=9cm, height=6cm,
    xticklabel style={/pgf/number format/fixed},
    xlabel={$p_{\text{min}}$},
    ylabel={minimum PML},
    ylabel style={yshift=-10pt}
]

\nextgroupplot[legend pos=south east,
legend style={yshift=40pt} 
]]
\addplot[name path=h3s, red, thin, dashed, mark=+] coordinates {
(0.02, 3.21887582) (0.04, 2.52572864) (0.06, 2.12026354) (0.08, 1.83258146) (0.10, 1.60943791) (0.12, 1.42711636) (0.14, 1.27296568) (0.16, 1.13943428) (0.18, 1.02165125) (0.20, 0.91629073)
};
\addlegendentry{$h^*$=3-S}

\addplot[name path=h3o, blue, thin, dashed, mark=o] coordinates {
(0.02, 3.2188797) (0.04, 2.52573013) (0.06, 2.12026596) (0.08, 1.83259964) (0.10, 1.60943985) (0.12, 1.42711639) (0.14, 1.27298355) (0.16, 1.13945007) (0.18, 1.02165222) (0.20, 0.91630936)
};
\addlegendentry{$h^*$=3-O}

\addplot[name path=h2s, red, thin, mark=+] coordinates {
(0.02, 2.81341072) (0.04, 2.12026354) (0.06, 1.71479843) (0.08, 1.42711636) (0.10, 1.2039728) (0.12, 1.02165125) (0.14, 0.86750057) (0.16, 0.73396918) (0.18, 0.61618614) (0.20, 0.51082562)
};
\addlegendentry{$h^*$=2-S}

\addplot[name path=h2o, blue, thin, mark=o] coordinates {
(0.02, 0.40548325) (0.04, 0.40548325) (0.06, 0.40548325) (0.08, 0.40548325) (0.10, 0.40548325) (0.12, 0.40548325) (0.14, 0.40548325) (0.16, 0.40548325) (0.18, 0.40548325) (0.20, 0.40548325)
};
\addlegendentry{$h^*$=2-O}

\end{groupplot}
\end{tikzpicture}
  \caption{Lowest leakage under worst-case utility $h^*$ and `three-low-one-high' prior with the utility-safe mechanism (S) and optimal mechanism (O). Suboptimality can be observed in low worst-case utility case (solid lines), while our utility-safe mechanism is optimal for higher utilities (dashed lines).}
  \label{fig:threelow}
\end{figure}

The optimality of utility-safe mechanisms depends on many factors, including $h^*$, the pattern and value of the prior distribution, and no single factor alone determines the sub-optimality of the utility-safe mechanism. According to Fig. \ref{fig:onelow}, in the `one-low-three-high' pattern the utility-safe mechanism is always optimal. Fig. \ref{fig:threelow} exhibits that when $h^*=3$ the utility-safe mechanism is optimal, but it is not optimal when $h^*=2$. The difference between the two solutions increases as all the prior probabilities concentrate more on one letter.

\section{Conclusion}
\label{sec:conclusion}

In contrast to DP, an $\epsilon$-PML privacy mechanism allows zeros in $P_{Y|X}$ if $\epsilon$ is large enough. This property can be exploited for an improved worst-case utility performance as demonstrated in our work. Using the decomposition property of PML, we identify the maximal number of zeros that can be placed in $P_{Y|X}$ given a PML constraint and propose the utility-safe mechanism. Numerical experiments illustrate that the mechanism leads to improved worst-case utility performance. Our proposed utility-safe mechanism often achieves the optimal performance that can be computed by using a linear programming solver only if the dimensions are sufficiently small. Experiments show that if the prior becomes concentrated, then improved designs might be possible.

\bibliographystyle{IEEEtran}
\bibliography{refs}

\end{document}